 \newtheorem{theorem}{Theorem}[section]
 \newtheorem{corollary}[theorem]{Corollary}
 \newtheorem{lemma}[theorem]{Lemma}
 \newtheorem{definition}[theorem]{Definition}
 \DeclareMathOperator*{\argmin}{arg\,min}
\title{\LARGE \bf
Capturing Aggregate Flexibility in Demand Response
}
\author{Mahnoosh Alizadeh, Anna Scaglione,  Andrea Goldsmith, George Kesidis
\thanks{Supported by US DOE under CERTS load as a resource program. }
\thanks{Mahnoosh Alizadeh and Anna Scaglione are at UC Davis. Andrea Goldsmith is at Stanford University. Goerge Kesidis is at Pennsylvania State University.
        Email: malizadeh@ucdavis.edu}%
}
\begin{document}

\maketitle
\thispagestyle{empty}
\pagestyle{empty}

\begin{abstract}
Flexibility in electric power consumption  can be leveraged by Demand Response (DR) programs. 
The goal of this paper is to systematically capture the inherent aggregate flexibility of a population of 
appliances. We do so by clustering individual loads based on their characteristics and service constraints. 
We highlight the challenges associated with learning the customer response to economic incentives while applying demand side management
to heterogeneous appliances. We also develop a framework to quantify customer privacy in direct load scheduling programs.
\end{abstract}

\section{INTRODUCTION}
It is widely understood that certain categories of electric loads are flexible. The challenge is to design control schemes and economic incentives to harness the flexibility of electric loads with very heterogeneous characteristics. A specific design for these control and economic aspects is what constitutes a Demand Response (DR) scheme. A large and heterogeneous populations of electric loads is challenging to model because the  demand at each time unit is stochastic, and its collective response to control signals depends on the response of individual members of the appliance population.

\subsubsection*{Contribution}
The response of an individual flexible load to control signals depends on the inherent set of loads that the customer can choose from, which we refer to as {\it load plasticity}. These responses add up to define the aggregate load response. The goal of this paper is to mathematically capture the {\it inherent} plasticity of aggregate load using the principles of quantization, and design demand response architectures that leverage these models. To tackle  appliance heterogeneity,  we use a clustering approach that quantizes the parameters used to describe the plasticity of each individual appliance in the population. We illustrate the model fundamentals considering loads that are canonical batteries, and as practical cases we only discuss deferrable loads for brevity. 
We then determine  how load plasticity can be exploited by two prominent types of DR, namely direct load scheduling and dynamic retail pricing. 
We also use mutual information bounds to quantify the leakage of information about private consumption profiles. 

\subsubsection*{Related work}
The growing literature on DR includes a wide range of load models for describing heterogeneous appliance characteristics.  Most commonly, previous works focus on the flexibility available to the DR scheme of interest, and not the underlying inherent flexibility of load that is the subject of this work. 

One approach for appliance modeling is to preserve all details about the associated states and constraints, which can entail great complexity for large populations. The electric power consumption properties are either idealized (often as a battery) or described realistically via hybrid dynamical systems equations. Examples of works applying market analyses using detailed models for individual appliances are \cite{galus,6084772,6419868,6507354,6145671}.  The complexity of detailed models is manageable when the population and associated number of decisions is small or when the policy is simple. For example, Home Energy Management Systems protocols can use detailed models to respond optimally to prices, because the number of decisions is limited \cite{mohsenian2010autonomous,kefayati2010efficient}.   Many papers considered large populations of Electric Vehicle (EV) \cite{kesidis,chen2012large,subramanian2012real,ramraja,6471273} and explored the optimality of simple policies such as  Least Laxity First (LLF).

Another series of papers propose to capture the total flexibility of deferrable loads for planning and market interactions as a tank that needs to be fully charged by a deadline \cite{homer, 6426102,ortega}. By discarding all specific characteristics of individual appliances except for their total energy consumption, the model has minimal computational cost  for planning, but could lead to  large errors as we see in Sec. \ref{sec.numerical}. 

We use a different approach. Specifically we use the principles of quantization to trade off accuracy for the order of the model. 
The concept is not completely new. In particular, the idea that aggregate load of appliances can be broken up into demands originating from various homogeneous groups of devices was first proposed for thermostatically controlled loads in \cite{1104071,4046405}.  In these works, heterogeneous appliances are subject to
the same control, within a load management program; perturbation analysis is used to provide an approximate description of aggregate load flexibility \cite{1104071}. This  load grouping approach was later used for other applications of load modeling, such as electric vehicle charge scheduling, e.g., \cite{6407491,smartgridcomm}, or to develop Markovian models for residential meter data \cite{ardakanian2011markovian}. Here we provide a unified view of load clustering that can capture the inherent flexibility of large populations and be used for direct scheduling.  Our model captures much higher levels of controllability in electric load, consistent with current practices in DR programs. 

{\bf Notation}: Continuous  variables are in roman font $\rm x(t)$  and discrete variables are in italic $x(t)$. Boldface
is used for vectors. Finite differences are $\partial x(t)=x(t+1)-x(t)$. The unit step is $u(t)$, and   the Kronecker delta function is $\delta(t)$.
The symbol $\star$ denotes discrete time convolution. 
The expected value of a random variable $x$ is denoted by ${\mathbb E}\{x\}$ .

\section{Modeling Aggregate Demand}\label{sec.modeling}\label{sec.loadmodel}
In this section we develop the framework to  trade-off model accuracy and complexity. 
We refer to the underlying potential of load to be modified by control actions as {\it load plasticity}. Mathematically, if an appliance indexed by $i$ becomes available for load control at time ${\rm \tau}_i$, the load plasticity is a set ${\mathcal L}_i({\rm t})$ of load profiles ${\rm L}_i({\rm t})$  that can satisfy the service requirements specified by the end-user at times ${\rm t}>{\rm \tau}_i$.

\subsection{The Canonical Battery Model}  The simplest type of load {\it plasticity} is that offered by the canonical battery category, which we now define. We assume that a canonical battery indexed by $i$ is plugged in at a  time $\tau_i$. We define an arrival process: 
\begin{equation}a_i(t)=u(t-\tau_i),\label{arrivaldef}\end{equation}
that marks the time when the battery is connected to the grid and first available to be controlled. 
It has an initial charge ${\rm S}_i$, a deadline to fully charge ${\rm \chi}_i$, and a capacity ${\rm E}_i$. Also, canonical batteries have no rate limitations and can move from any state to any other state instantaneously. Denoting the current state of charge of the battery by $\rm x_i(t)$, the plasticity of the canonical battery is the set
  \begin{align}
  \rm
{\mathcal L}_i(\rm t)=\big\{\rm L_i(\rm t) |&  {\rm L}_i(\rm t) = \dot {\rm x}_i(\rm t), {\rm x}_i(\tau_i) = {\rm S}_i, {\rm x}_i({\rm \chi}_i) = {\rm E}_i,  \nonumber\\&0 \leq {\rm x}_i(t)\leq {\rm E}_i, {\rm t}\geq {\rm \tau_i}\big\}.\label{lossless-battery}
\end{align}

This model is analog and continuous. Quantizing continuous values and signals is the basic principle of compression and source coding. We advocate using these principles to provide a class of models that is amenable and tractable for demand response.

We first quantize time ${\rm t}$ into equally-distanced discrete epochs $t$ separated by $\delta T$.  Second, we quantize all energy variables $({\rm x}_i(t), {\rm E}_i, {\rm S}_i)$ using a uniform quantizer with step $\delta x$.  Both $\delta T$ and $\delta x$ are equal to 1 for brevity of notation. Thus, the discrete version of the load plasticity for a canonical battery is:
  \begin{eqnarray}\nonumber
{\mathcal L}_i(t)&=&\{L_i(t) |  L_i(t) = \partial x_i(t), x_i(\tau_i) = S_i, x_i(\chi_i) = E_i, \\
				& &x_i(t) \in \{0,1,\ldots, E_i\}, \tau_i \leq t \leq \chi_i \}\label{lossless-battery-discrete}.
\end{eqnarray}
Clearly, there is no flexibility offered before time $\tau_i$ and after time $\chi_i$, i.e.,
\begin{equation}
{\mathcal L}_i(t)=\{0\} ,~~ \forall t<\tau_i, \forall t>\chi_i.
\end{equation}

The appliance type $v$ (canonical battery here), and the following parameters specify  $ {\mathcal L}_i(t)$ uniquely:
$$\boldsymbol{\vartheta}_ i = (\tau_i, S_i, \chi_i, E_i) \in \mathcal{T}^v.$$

\subsection{Population Model for Canonical Batteries}\label{idealbat.pop}
The difference between modeling a single canonical battery and a population of them ($i \in {\cal P}^v$ ) is in capturing the non-homogeneous individual parameters $\boldsymbol{\vartheta}_ i = (\tau_i, S_i, \chi_i, E_i)$ in the description of the aggregate load plasticity. After quantizing the parameters we cluster together all batteries that fall in the same quantization bin. We denote the plasticity of each member of these groups as ${\mathcal L}^v_{\boldsymbol{\vartheta}}(t)$.

We define the following operations on plasticities $\mathcal{A}_1$, $\mathcal{A}_2$:
\begin{align}&\mathcal{A}_1 + \mathcal{A}_2 = \left\{x| \exists (x_1,x_2) \in \mathcal{A}_1\times \mathcal{A}_2,  \mbox{such that} ~ x = x_1 + x_2  \right\}\nonumber\\
&n \mathcal{A}_1 = \left\{x| \exists x_i \in \mathcal{A}_1, i =1,\ldots, n,  \mbox{such that} ~ x = \sum_{i=1}^n x_i  \right\},\nonumber\end{align}
where $n \in \mathbb{N}$ and $0 \mathcal{A}_1 \equiv \{0\}$.

We denote as $n^v_{\boldsymbol{\vartheta}}$ the total number of batteries with $\boldsymbol{\vartheta} \in \mathcal{T}^v$.
Then, the aggregate plasticity of the population ${\cal P}^v$  is:
\begin{eqnarray}
{\mathcal L}^v(t)&=& \sum_{\boldsymbol{\vartheta}\in \mathcal{T}^v}n^v_{\boldsymbol{\vartheta}}{\mathcal L}^v_{\boldsymbol{\vartheta}}(t).
\end{eqnarray}

Next, we wish to simplify the description of plasticity ${\mathcal L}^v(t)$ by decreasing the number of variables that define it. We do so by exploiting similarities in the sets ${\mathcal L}^v_{\boldsymbol{\vartheta}}(t)$. 

Generally, the parameters that describe an individual appliance load can be divided in two subsets: one set of parameters, denoted by $\boldsymbol{\kappa}_i$, describe the initial state of  control variables. For example, $S_i$ and $\tau_i$ here define the initial arrival event of each battery. Changing these quantities affects the load plasticity of an appliance only in a transient fashion, but does not affect the underlying structure of the plasticity. Appliances that only differ in terms of these initial parameters can be bundled together in a single population model for the load plasticity, as we will see next. Parameters in $\boldsymbol{\vartheta}_i$ not included in $\boldsymbol{\kappa}_i$ are denoted by $\boldsymbol{\theta}_i$.

\subsubsection{Clustering Batteries with Homogeneous $\boldsymbol{\theta}_i = (\chi_i, E_i)$}
Here we simplify the aggregate plasticity of batteries that have the same $(\chi, E)$, but have non-homogeneous $S_i$ and $\tau_i$. 
We denote as $a_x(t)$ the total number of batteries that arrive in the system with an initial state of charge $S_i$ equal to $x$ at or before time $t$. The value of $a_x(t)$ can be written in terms of individual arrival processes $a_i(t)$ as:
\begin{equation}
 a_x(t)=\sum_{i\in {\cal P}^v}\delta(S_i-x)  a_i(t).
\end{equation}
We refer to $a_x(t)$ as the {\it arrival process for state $x$}. We also denote the total number of batteries in state $x$ at time $t$ as $n_x(t)$, where
\begin{equation}\label{numberineachstate}
 n_x(t)=\sum_{i\in {\cal P}^v}\delta(x_i(t)-x)a_i(t),~x=1,\ldots,E.
\end{equation}

Using \eqref{lossless-battery-discrete} and \eqref{arrivaldef},  the total load of the batteries is:
\begin{equation}\label{pop}
L(t)=\sum_{i\in {\cal P}^v}L_i(t)=\sum_{i\in {\cal P}^v}\partial x_i(t) a_i(t).
\end{equation}
Next, we directly tie the evolution of $n_x(t)$ and $a_x(t)$ to the total load $L(t)$, removing all dependence on $x_i(t)$ and $a_i(t)$.

\begin{lemma}\label{nxlemma}{\it 
The following relationship holds between $n_x(t)$ and the load $L(t)$:
\begin{align}\label{occur}
L(t) & = \sum_{x=0}^{E}\left[\left(\sum_{x'=x}^{E} \partial n_{x'}(t) \right) - (x+1)\partial a_{x}(t)\right].
\end{align}
}
\end{lemma}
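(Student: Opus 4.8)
The plan is to introduce the aggregate stored energy of the population, $X(t)=\sum_{i\in{\cal P}^v}a_i(t)\,x_i(t)$, as a bridge: I will first express the load $L(t)$ through $\partial X(t)$ and the arrivals, then express $X(t)$ itself through the occupancy counts $n_x(t)$, and finally eliminate $X(t)$ between the two relations. Throughout I use the natural extension of \eqref{numberineachstate} to the empty state, $n_0(t)=\sum_{i\in{\cal P}^v}\delta(x_i(t))\,a_i(t)$, which the outer index $x=0$ in \eqref{occur} tacitly needs.

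\emph{Step 1 (load versus stored energy).} Apply the finite-difference product rule $\partial(fg)(t)=f(t)\,\partial g(t)+g(t+1)\,\partial f(t)$ with $f=a_i$, $g=x_i$, and sum over $i\in{\cal P}^v$. Because of \eqref{pop} this yields $\partial X(t)=L(t)+\sum_{i\in{\cal P}^v}x_i(t+1)\,\partial a_i(t)$. By \eqref{arrivaldef}, $\partial a_i(t)$ is nonzero only at the arrival epoch $t+1=\tau_i$, where $x_i(\tau_i)=S_i$ by \eqref{lossless-battery-discrete}; hence the residual sum equals $\sum_{i}S_i\,\partial a_i(t)=\partial\big(\sum_{i}S_i\,a_i(t)\big)=\sum_{x=0}^{E}x\,\partial a_x(t)$, the last equality coming from regrouping batteries by their initial state. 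Therefore
\[
L(t)=\partial X(t)-\sum_{x=0}^{E}x\,\partial a_x(t).
\]

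\emph{Step 2 (stored energy versus occupancy counts).} Writing $x_i(t)=\sum_{x=0}^{E}x\,\delta(x_i(t)-x)$ and exchanging the sums over $x$ and over $i$ gives $X(t)=\sum_{x=0}^{E}x\,n_x(t)$. Now use the elementary rearrangement $\sum_{x=0}^{E}x\,b_x=\sum_{x=0}^{E}\sum_{x'=x}^{E}b_{x'}-\sum_{x'=0}^{E}b_{x'}$ with $b_x=n_x(t)$, together with the count identity $\sum_{x'=0}^{E}n_{x'}(t)=\sum_{i\in{\cal P}^v}a_i(t)=\sum_{x=0}^{E}a_x(t)$ (each arrived battery occupies exactly one state). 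This gives $X(t)=\sum_{x=0}^{E}\sum_{x'=x}^{E}n_{x'}(t)-\sum_{x=0}^{E}a_x(t)$. Taking the finite difference and substituting into Step~1, the two arrival sums merge via $\sum_{x}x\,\partial a_x(t)+\sum_{x}\partial a_x(t)=\sum_{x}(x+1)\,\partial a_x(t)$, which is precisely \eqref{occur}.

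The two summation exchanges are routine. The delicate point is the arrival bookkeeping in Step~1: one must select the version of the product rule whose surviving sum is exactly $L(t)$ rather than a time-shifted copy, and one must keep consistent track of the empty state and of the ``$+1$'' offset --- it is this offset that forces the outer index in \eqref{occur} to start at $x=0$ and to carry the weight $x+1$ rather than $x$. Once $X(t)$ is introduced, the remainder is accounting.
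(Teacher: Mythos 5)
Your proof is correct: I verified each step (the discrete product rule with the $g(t+1)$ convention, the identification $\sum_i S_i\,\partial a_i(t)=\sum_x x\,\partial a_x(t)$ at arrival epochs, the rearrangement $\sum_x x\,n_x=\sum_x\sum_{x'\ge x}n_{x'}-\sum_{x'}n_{x'}$, and the cancellation $\sum_{x'}n_{x'}(t)=\sum_x a_x(t)$), and the result matches \eqref{occur} exactly, including the $(x{+}1)$ weight and the need to include the $x=0$ occupancy. The paper omits its proof of Lemma \ref{nxlemma} entirely, so there is nothing to compare against; your derivation via the aggregate stored energy $X(t)=\sum_x x\,n_x(t)$ is a complete and valid justification of the stated identity.
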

Proof omitted for brevity.

Lemma \ref{nxlemma} shows that to model or control the load $L(t)$,   the Aggregator needs to keep track of $a_x(t)$, and track/control the evolution of $n_x(t)$. Clearly, $n_x(t)$ cannot take arbitrary values. Specifically, how appliances are allowed to move from one state to another constrains the evolution of $n_x(t)$. For canonical batteries, the set of possible control actions is easy to specify, as each battery can move from one state $x$ to any other $x'$ in just one time step by consuming or drawing the charge $x'- x$.

\begin{definition}{\it  We denote the total number of batteries that go from state $x$ to state $x'$ at time step $t$ as $d_{x,x'}(t)$. We refer to $d_{x,x'}(t)$  as the {\it switch process from state $x$ to $x'$}. By definition,  $d_{x,x}(t) = 0, ~\forall t,x$, and $d_{x,x'}(0) = 0,~\forall x,x'$.}
\end{definition}
\begin{corollary}{\it
The occupancy $n_x(t)$ and aggregate load $L(t)$ in terms of $d_{x,x'}(t)$ are: }
\begin{align}\label{nxt2}
n_{x}(t+1)&=  a_x(t+1) + \sum_{x' = 0}^E [d_{x',x}(t)- d_{x,x'}(t)]
\\\label{occur3}
L(t) &= \sum_{x=0}^E \sum_{x'=0}^E (x'-x)\partial d_{x,x'}(t)
\end{align}
\end{corollary}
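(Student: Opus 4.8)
\emph{Proof strategy.} The plan is to read both identities as bookkeeping statements about a conserved population of batteries: \eqref{nxt2} is a mass balance for the occupancy of each state, and \eqref{occur3} is a re-indexing of the load \eqref{pop} by the transition that each battery performs in a given step. Accordingly I would establish \eqref{nxt2} first, by a one-step balance followed by a telescoping sum, and then obtain \eqref{occur3} directly from \eqref{pop}.

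For \eqref{nxt2}, fix a state $x$ and track how $n_x(t)$ can change between two consecutive epochs. Only three mechanisms act: a battery in $x$ switches to some $x'$, which removes it from $x$ and is counted by $\partial d_{x,x'}(t)$; a battery in some $x'$ switches into $x$, counted by $\partial d_{x',x}(t)$; and a battery newly becomes available directly in state $x$, counted by $\partial a_x(t)=a_x(t+1)-a_x(t)$. (The diagonal is harmless since $d_{x,x}(t)\equiv 0$; and since $a_i(t)=u(t-\tau_i)$ never decreases, a plugged-in battery is never removed from the occupancy \eqref{numberineachstate}, so no departure term is needed.) This yields the recursion $n_x(t+1)-n_x(t)=\partial a_x(t)+\sum_{x'=0}^{E}\bigl[\partial d_{x',x}(t)-\partial d_{x,x'}(t)\bigr]$. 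Summing it down to the first epoch and using the stated initializations $d_{x,x'}(0)=0$ and $n_x(0)=a_x(0)$ (at the first epoch every present battery is still in the state in which it arrived), the sums on $a$ and on $d$ telescope and collapse to \eqref{nxt2}.

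For \eqref{occur3}, I would start from $L(t)=\sum_{i\in{\cal P}^v}\partial x_i(t)\,a_i(t)$ in \eqref{pop} and partition the active batteries ($a_i(t)=1$) according to the ordered pair $(x_i(t),x_i(t+1))$ of states they occupy at the two consecutive epochs. A battery realizing the pair $(x,x')$ adds exactly $x'-x$ to the load, and the number of active batteries realizing $(x,x')$ at step $t$ is, by definition of the switch process, the increment $\partial d_{x,x'}(t)$. Summing over all pairs gives $L(t)=\sum_{x=0}^{E}\sum_{x'=0}^{E}(x'-x)\,\partial d_{x,x'}(t)$, the diagonal $x=x'$ dropping out because it carries zero weight. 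Equivalently, one may substitute the just-established \eqref{nxt2} into identity \eqref{occur} of Lemma \ref{nxlemma} and simplify, which is the derivation implied by stating this as a corollary.

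Because canonical batteries can move from any state to any other in a single step, there is nothing to check about feasibility beyond $0\le x\le E$, so the substantive mathematical content is small. The step I expect to be the only real obstacle is the timing bookkeeping: one must pin down, once and consistently, whether the step-$t$ switches counted by $\partial d_{x,x'}(t)$ are those that land in their new state at time $t$ or at time $t+1$ (and likewise for the epoch at which an arrival is registered in $a_x$), and then verify that the telescoped indices and the $t=0$ boundary terms match exactly the indices written in \eqref{nxt2} and \eqref{occur3}.
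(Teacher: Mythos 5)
Your proof of \eqref{nxt2} is exactly the paper's: the one-step balance \eqref{nxt} (arrivals in, switches in, switches out) followed by a telescoping sum over time. For \eqref{occur3} you lead with a genuinely different and more elementary argument --- partitioning the active batteries in \eqref{pop} by their transition pair $(x_i(t),x_i(t+1))$ and noting that each pair $(x,x')$ carries weight $x'-x$ and multiplicity $\partial d_{x,x'}(t)$ --- whereas the paper substitutes the balance \eqref{nxt} into identity \eqref{occur} of Lemma \ref{nxlemma}, after which the $\partial a_x$ terms cancel and the coefficient of $\partial d_{x,x'}$ collapses to $x'-x$. You correctly identify this substitution as the route implied by calling the statement a corollary, so nothing is missed; your direct route has the advantage of not leaning on Lemma \ref{nxlemma} (whose proof the paper omits), while the paper's route is a short algebraic manipulation once that lemma is granted. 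Your closing caveat about timing is well taken and is, if anything, the one place the paper is loose: literally telescoping $\sum_{s\le t}\partial d_{x,x'}(s)$ with $d_{x,x'}(0)=0$ yields $d_{x,x'}(t+1)$, so \eqref{nxt2} as written implicitly adopts the convention that switches registered in $d_{x,x'}(t)$ take effect in the occupancy at $t+1$ --- precisely the off-by-one bookkeeping you flag as needing to be pinned down.
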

\begin{proof}
The occupancy at time $t+1$ should include the previous occupancy plus new arrivals from other states or from outside, minus the population that exits the state:
\begin{align}\label{nxt}
\partial n_{x}(t)&=\partial a_x(t) + \sum_{x' = 0}^E \partial [d_{x',x}(t)-d_{x,x'}(t)]
\end{align}
which leads to \eqref{nxt2} if summed over time. If we substitute the value of $\partial n_x(t)$ in \eqref{nxt} into \eqref{occur}, we get \eqref{occur3}.\end{proof}
Thus, the load plasticity can be presented in terms of the $d_{x,x'}(t)$'s, under appropriate constraints:
  \begin{align}\nonumber
{\mathcal L}^v(t) \!=& \! \! \bigg\{L(t) |  L(t) \!=\!\! \sum_{x=0}^E \sum_{x'=0}^E (x'-\!x)\partial d_{x,x'}(t), \partial d_{x,x'}(t) \!\in \!\mathbb{Z}^+, \\ &\sum_{x' = 1}^{E}\partial d_{x,x'}(t) \leq n_x(t), n_x(\chi) = 0, \forall x < E \bigg\}\label{occur4}
\end{align}
where $n_x(t)$ is given by \eqref{nxt2}. The second constraint ensures that only batteries present in state $x$ can be moved from $x$ to any $x'$. The third ensures that the deadline is met.

The reader should note the simplicity of the  load population model in \eqref{occur4}, since it only contains linear  equations and constraints. Also, 
for large populations $d_{x,x'}(t)$'s can be approximated with a real number. The model is scalable since load plasticity is in an affine space and requires tracking $(a_x(t),n_x(t))$, for the limited number of quantized states $E$ considered, and deciding $E^2$ values $d_{x,x'}(t)$, no matter how large the population. 

\subsubsection{Bundling Batteries with Non-homogeneous $\boldsymbol{\vartheta}_i$}\label{cluster.sec}
Remember that we divided appliance parameters into two parts, i.e., $\boldsymbol{\vartheta}_i = (\boldsymbol{\theta}_i, \boldsymbol{\kappa}_i) $.
Unlike the parameters  in $\boldsymbol{\kappa}_i = (\tau_i,S_i)$, the parameters in $\boldsymbol{\theta}_i$ change the set of decision variables $d_{x,x'}(t) $ and constraints that describe the load plasticity. Thus, we cannot bundle batteries with non-homogeneous $\boldsymbol{\vartheta}_i$ all together.
We bundle requests with similar constraints captured through $\boldsymbol{\theta}_i $  in {\it clusters}  indexed by $q=1, \ldots, Q^v$:
\begin{equation}
\boldsymbol{\theta}_i \xrightarrow[\mathcal{Q}]{\mbox{Quantize}} \boldsymbol{\theta}_q \xrightarrow[\mathcal{I}] {\mbox{Cluster index}} q.
\end{equation} 
The level of quantization error can be controlled by modifying $Q$ or $\boldsymbol{\theta}_q$'s, and is the knob that controls the complexity and accuracy of the aggregate model. 

We use a superscript $q$ to refer to any previously defined set of decision variables $d^q_{x,x'}(t)$ for cluster $q$. Thus, generalizing \eqref{occur4} to the non-homogeneous case,
  \begin{eqnarray}\nonumber
{\mathcal L}^v(t) = \Bigg\{ L(t) |  L(t) = \sum_{q=1}^{Q^v}\sum_{x=0}^{E^q} \sum_{x'=0}^{E^q} (x'-x)\partial d^q_{x,x'}(t)~~\nonumber 
& & \\ \partial d_{x,x'}^q(t) \in \mathbb{Z}^+, \forall x,x' \in \{0,1,\ldots, E^q\}~~& &  \nonumber \\ \sum_{x' = 1}^{E^q}\partial d_{x,x'}^q(t)\leq n^q_x(t), n_x(\chi^q) = 0, \forall x < E^q \Bigg\}&& \label{occur5} 
\end{eqnarray}
with
$n_{x}^q(t)=  a^q_x(t) + \sum_{x' = 0}^{E^q} [d^q_{x',x}(t-1)- d^q_{x,x'}(t-1)]$.

This gives us a hybrid stochastic model for the load plasticity of a population of canonical batteries. Next, in Section \ref{pop.gen}, we will build on this model to present the aggregate load plasticity of more realistic deferrable loads.


\section{Population Models for Realistic Loads}\label{pop.gen}
Here we build on the canonical battery population model to tackle other categories of real appliances. For brevity only {\it deferrable} loads are considered. Appliances in each category $v$ are bundled together in a single population model, whose plasticity set is $\mathcal{L}^v(t)$. There are $Q^v$ clusters for each category $v$.
In general, an Aggregator can serve $V$ different categories of loads. Given the population load plasticity of each category ${\mathcal L}^v(t)$, the load plasticity of the total demand served by an Aggregator as the set:
\begin{align}\label{totplas}
{\mathcal L}(t)= {\mathcal L}^I(t)+ \sum_{v=1}^V {\mathcal L}^v(t)  
\end{align}
where ${\mathcal L}^I(t)$ denotes the plasticity of the inflexible demand served by the Aggregator, which has only one set member:
\begin{align}\label{totplas}
{\mathcal L}^I(t)=\{ L^{I}(t)\}
\end{align}

\subsection{Rate-Constrained Instantaneous Consumption (RIC)} \label{cat1}
This class is a non-homogeneous population of batteries, each characterized by the vector $(\tau_i, X_i , E_i, \chi_i, \rho_i, G_i)$, where $\chi_i$   denotes the deadline for battery $i$ to receive at least $\rho_i$ percent of full charge $E_i$, and $G_i$ denotes the maximum rate at which battery $i$ can be charged/discharged. 
This model could be used, for example, for Electric Vehicle charging and Vehicle to the Grid (V2G) applications. 

The $i$th load plasticity is:
  \begin{eqnarray}\nonumber
{\mathcal L}_i(t)&=&\{L_i(t) |  L_i(t) = \partial x_i(t), x_i(\tau_i) = S_i, \\
				& &x_i(t) \in \{0,1,\ldots, E_i\}, x_i(\chi_i) \geq \rho_i E_i, \nonumber \\
& & -G_i \leq \partial x_i(t) \leq G_i, t\geq \tau_i \}  \label{lossless-battery-discrete2}
\end{eqnarray}
Writing the population plasticity is straightforward following the steps taken for canonical batteries. We cluster batteries using $\boldsymbol{\theta}_i = (E_i, \chi_i, \rho_i, G_i)$, and denote the parameters associated with cluster $q$ as $(E^q, \chi^q, \rho^q, G^q)$. 
In each cluster $q$:
\begin{enumerate}
\item Contrary to the canonical battery case, due to rate limitations, at each time step, a battery can move by a maximum of $G^q$ states, i.e. an appliance can draw or inject power at a rate $|\partial x(t)|\leq G^q$. Thus, from state $x$, an appliance can move only to one of the following states:
\begin{equation}
{\mathcal S}_x =\{x - G^q, x - G^q+1, \ldots, x + G^q\};
\end{equation}
\item All appliances in cluster $q$ should be in a state $x \geq \rho^q E^q$ by $\chi^q$. Equivalently $n_x^q(\chi^q) = 0, \forall x < \rho^q E^q$.
\end{enumerate}

Consequently, the load plasticity of the population is:
  \begin{align} \label{occur5} 
{\mathcal L}^v(t) & \!= \!\!\Big\{L(t) | \! L(t) \!= \!\!\sum_{q=1}^{Q^v}\!\sum_{x=0}^{E^q}\!\sum_{x'\in \mathcal S_x}\!\!(x'\!-\!x)\partial d^q_{x,x'}(t), \partial d_{x,x'}^q(t)\! \in \!\mathbb{Z}^+ \nonumber
\\ &\!\!\sum_{x'\in \mathcal S_x}\!\!\partial d_{x,x'}^q(t)\leq n^q_x(t),\!  n_x^q(\chi^q) \!= \!0, \forall x\! < \!\rho^q E^q\Big\}, 
\end{align}
with
$n_{x}^q(t)=  a^q_x(t) + \sum_{x'\in \mathcal S_x}[d^q_{x',x}(t-1)- d^q_{x,x'}(t-1)]$.

\subsection{Interruptible service (IS)}  \label{cat2}
The IS is a simple variation of the previous model, where there is only one possible positive rate of charge $\partial x_i(t)=G_i$ or else,  $\partial x_i(t)=0$. This category best models pool pumps or EVs that can only be charged at a certain charging rate, e.g., 1.1 kW or 3.3 kW. We cluster loads based on $\boldsymbol{\theta}_i = (E_i, \chi_i, \rho_i, G_i)$, and the population plasticity is:
 \begin{align} %
{\mathcal L}^v(t) \!=\! &\Big\{L(t) |  L(t) \!= \!\sum_{q=1}^{Q^v}\sum_{x=0}^{E^q} \!(x'\!-\!x)\partial d^q_{x,x'}(t)|_{x'=\min\{x+G^q,E^q\}}, \nonumber  \\ 
&\partial d_{x,x'}^q(t) \in \mathbb{Z}^+, \partial d_{x,\min\{x+G^q,E^q\}}^q(t)\leq n^q_x(t),\nonumber \\ 
&  n_x^q(\chi^q) = 0, \forall x < \rho^q E^q\Big\} \label{occur6} 
\end{align}
The model can be further complicated by considering a rate $G^q_x$ that is state dependent. This can 
 capture uneven EV load profiles during the initial and final phases of charging \cite{kesidis}.

\subsection{ Non-interruptible Deferrable Service (NID)}\label{cat4}
This category best models appliances such as washer/dryers, and non-interruptible EV charging.
Each appliance $i$ in this category is characterized by $(\tau_i,\chi_i,\ell_i(t))$, where $\tau_i$ is the arrival time of the request, $\chi_i$ is the maximum tolerable delay for the task to start, and $\ell_i(t)$ is a pulse that captures the load profile of appliance $i$ (if it is turned on at $t=0$). 
Hence, the only control available is shifting the load by a delay $f_i$.  The 
plasticity is  \begin{align}
{\mathcal L}_i(t)=\{L_i(t)| L_i(t)=\ell_i(t-f_i),\tau_i \leq f_i \leq t + \chi_i\}
\end{align}
The description above can be replaced with the following integer linear model, based on the state $x_i(t)$ of the ON switch:
\begin{align}
{\mathcal L}_i(t)=&\{L_i(t)| L_i(t)=\ell_i(t)\star \partial x_i(t), x_i(t) \in \{0,1\},\label{eq.defer}\\
	& ~x_i(t)\geq a_i(t-\chi_i), ~x_i(t-1)\leq x_i(t)\leq a_i(t) \}.\nonumber
\end{align}
with $x_i(t) =  0$ and $1$ respectively corresponding to off and on status, and $\star$ denotes the convolution operation.  
Note that $x_i(t)$ can only be in the form of a step function due to the constraints.
The convolution with $\partial x_i(t)=\delta(t-f_i)$ yields
$L_i(t)=\ell_i(t-f_i).$ The constraints ensure causality and that delay does not exceed the maximum tolerable amount.


In this case, appliances are clustered based on a quantized profiles $\ell^{q}(t)$ that most resembles their behavior once on, plus the maximum acceptable delay $\chi^q$. All arrivals that belong to a cluster $a^q(t)$ are in the same same OFF state, and therefore a single process $d^q(t)$ per cluster is needed to account for the activations. 
The population plasticity is:
\begin{align}\label{lint}
{\mathcal L}^v(t) \!=\! \Big\{&L(t) |  L(t) \!= \!\sum_{q=1}^{Q^v} \ell^q(t)\star \partial d^q(t), d^q(t)\in\mathbb{Z}^+ \\ 
& ~d^q(t)\geq a^q(t-\chi^q), ~d^q(t-1)\leq d^q(t)\leq a^q(t)
\Big\} \nonumber
\end{align}
Unlike the previous plasticity models that are closer to discrete time linear systems with integer constrains, the change in state triggers a power injection that lasts for the entire duration of the injection pulse $\ell^q(t)$. This model corresponds to a hybrid systems since it includes discrete switching as well as a dynamics; there are only two hybrid states, one being the system at rest, and the other is the system evolution once ON, captured by $\ell^q(t)$. The pulse $\ell^q(t)$ can be viewed as the impulse response of the dynamical system when it switches to the ON state at time zero. 

The modeling for the NID can be extended to include
more hybrid states $x\in {\mathcal X}^q$ than just a single switch, modeling non-interruptible tasks that follow each other, with a single shared deadline (e.g. a washer and dryer cycle).

\section{Planning, Control, Pricing and Information}\label{sec.why}
As mentioned in the Introduction, hybrid stochastic load models can help planning and real-time control decisions of an Aggregator. For example, in a  two-settlement structure:

{\bf Ex-ante}: the Aggregator plans how much power $B(t)$ to purchase and how much ancillary service  capacity $M(t)$ to offer in the forward market. The Aggregator picks $(B(t),M(t))$  to minimize a cost ${\mathcal C}^{F}(L(t),B(t),M(t))$, which includes costs and benefits of buying energy and selling ancillary services over a certain time horizon $\Omega$, i.e. 
\begin{equation}\label{ex-ante}
\min_{B(t),M(t)} \sum_{t\in \Omega}{\mathbb E}\{{\mathcal C}^{F}(L(t),B(t),M(t))\}~~\mbox{s.t.}~ L(t)\in { \mathcal L}^{DR}(t),
\end{equation}
where ${\mathcal L}^{DR}(t) \subseteq {\mathcal L}(t)$ denotes the set of possible load shapes that can be extracted from a population with load plasticity ${\mathcal L}(t)$ under a specific demand response strategy exercised by the Aggregator. The expected value averages over the appliance arrivals statistics. For brevity, we assume that $(B(t),M(t))$ will be cleared at the marginal price by the market operator.

{\bf Real-time}: the Aggregator  is committed to control $L(t)$ to follow the schedule $(B(t),M(t))$ for the current time $t$ 
and minimize its real-time cost. The Aggregator can be myopic:
\begin{equation}\label{myopic}
\min_{L(t)} ~{\mathcal C}^R(L(t),B(t),M(t))~~\mbox{s.t.}~ L(t)\in { \mathcal L}^{DR}(t),
\end{equation}
or be foresighted, solving a model-predictive problem similar to \eqref{ex-ante}, with the real-time cost ${\mathcal C}^R$ as opposed to ${\mathcal C}^{F}$ for a time horizon $\Omega=\{t,t+1,\ldots,t+H\}$ that includes $H$ dummy future decisions, constantly revised and updated in real-time. 
Note that such model-predictive strategies will be very hard to implement if the Aggregator considers the continuous characteristics of every single individual load. Thus, without clustering, the Aggregator may need to resort to myopic policies for real-time control.

Next we discuss the set ${ \mathcal L}^{DR}(t)$ for the two competing solutions for continuous end-use demand management, i.e., direct load scheduling and dynamic pricing.

\subsection{Incentive-Based Direct Load Scheduling (DLS)}\label{drtypes} For this type of demand management, the Aggregator directly observes the arrivals of the controlled appliances and decides the exact load shape for the aggregate load of the population, choosing from the set ${\mathcal L}^{DR}(t)$. 

This makes DLS the most reliable form of load management possible because the Aggregator has full information, unlike in dynamic pricing.
But what is ${\mathcal L}^{DR}(t)$?  This set strictly depends on the choice of end-use customers to participate in the DLS program and allows the Aggregator to schedule their consumption. A rational customer will not allow an Aggregator to exploit load plasticity for free on a regular basis. Let us denote by $i^v_\vartheta(t)$ the incentives available to appliances in category $v$ with parameters $\vartheta$ participating in the DLS program.  The vector that contains all incentives for $ \vartheta \in \mathcal{T}^v$ in  category $v$ is $\boldsymbol{i}^v(t)$. Then, 
\begin{equation}{\mathcal L}^{DR}(t) = {\mathcal L}^I(t;\boldsymbol{i}^v ) + \sum_{v=1}^V\sum_{\boldsymbol{\vartheta}\in \mathcal{T}^v}n^v_{\boldsymbol{\vartheta}}( \boldsymbol{i}^v){\mathcal L}^v_{\boldsymbol{\vartheta}}(t). \end{equation}
where the number of flexible appliances is modeled as a function of the incentives.
Note that the incentives $\boldsymbol{i}^v(t)$ have two main effects: 1) they could affect the participation rate of customers in the DLS program; 2) they could affect the consumption habits of end-users. For example, if a customer observes that longer EV charge cycles are paid less incentives per unit charge, then they could charge their EV more often to receive more discount. Another example would be that a customers may run the dishwasher more often when they can get lower billing rates due to DLS participation.  These two effects present uncertainty to the Aggregator and need to be modeled for incentive design.

 However, the good news is that, no matter how well the economic side of the problem is handled, the control performance is not directly affected. The number of recruited appliances  $n^v_{\boldsymbol{\vartheta}}(\boldsymbol{i}^v)$ is observable after posting the incentives and before control. Thus, as opposed to the economic aspects of the problem, the Aggregator faces no uncertainty in {\it real-time control} and ${\mathcal L}^{DR}(t)$ is deterministically known.

 The design of appropriate economic incentives for direct load management is the subject of ongoing research, see, e.g., \cite{alizadeh, bitar2012deadline,kefayati}, and will not be discussed in detail.

\subsection{Dynamic Retail Pricing} In this type of demand management, time-varying retail prices constitute the only knob available to the Aggregator for both real-time control and billing. Thus, ${\mathcal L}^{DR}(t)$ might not necessarily be equal to the inherent load plasticity of the population  ${\mathcal L}(t)$. Denoting the vector of retail prices posted for the time horizon $t \in \{1,\ldots,T\}$ as $\mathbf{p}(t) = [\pi^r(1),\ldots, \pi^r(T)]$, this type of DR amounts to the following set of possible load shapes:
\begin{align}
{\mathcal L}^{DR}(t) = \big\{ L(t) |& L(t) = f(t;\mathbf{p}(t)), \mathbf{p}(t) \in \mathcal{Z}(t) \big\}
\end{align}
where $\mathcal{Z}(t)$ is the set of all possible retail prices (probably partially regulated), and the function $f(.)$ denotes the price-response of the population, which is equal to: 
\begin{align}\label{priceresp}
 f(t;\mathbf{p}(t)) = &L^I(t) + \\&\sum_{v=1}^V\sum_{\boldsymbol{\vartheta}\in \mathcal{T}^v} n^v_{\boldsymbol{\vartheta}}(\mathbf{p}(t)) {\displaystyle \argmin_{L(t) \in \mathcal{L}^v_{\boldsymbol{\vartheta}}(t)}} \sum_{t = 1}^{T} \pi(t) L(t). \nonumber
\end{align}

Given that we know the plasticity $\mathcal{L}^v_{\boldsymbol{\vartheta}}(t)$,  the unknown terms in the price response function are the number of appliances in each category $v$ with characteristic vector $\boldsymbol{\vartheta}$, i.e., $n^v_{\boldsymbol{\vartheta}}(\mathbf{p}(t))$. However, as opposed to the DLS case,  these numbers are not observable to the Aggregator after posting the prices and the customer response is harder to learn. Also, the economic side and the control side of the problem are fully tied together here,  which further complicates planning.
\subsection{Information for DLS and Dynamic Pricing}

These are the components necessary for the DLS:
\begin{enumerate}
\item \textit{Uplink  information}: The Aggregator needs the real time values of the arrival processes $a_x^{q,v}(t)$ for each state $x$, cluster $q$ and category of appliance $v$; for a Aggregator with a portfolio of $V$ appliances categories, the expected number of messages per unit time $\delta T$ is:
\begin{equation}
\sum_{v=1}^V\sum_{q=1}^{Q^v}\sum_{x=1}^{E^q} \mathbb{E}\{\partial a_x^{q,v}(t)\}
\end{equation}
\item  \textit{Downlink  information}: The Aggregator needs to control the appliances;  
\item {\it Measurement and verification}: A slower channel is necessary to measure that the control actions are correctly executed, to properly account for the benefits $\boldsymbol{i}^v$ or penalties associated with the appliance response. 
\end{enumerate}
We claim that all the uplink and downlink communication can be designed to keep participating customers anonymous. Retaining anonymity means that the Aggregator can be blind to the appliance owner's identity and still function. To enforce {\it Measurement and verification}  and retain anonymity, it is necessary to enlist a third party for billing. Note that anonymity and privacy are not the same thing, as we discuss later in Sec. \ref{sec.privacy}.
 
In the uplink, knowing $a_x^{q,v}(t)$ does not require customer identity. Thus, third party {\it collectors} can be in charge of gathering arrival information across the grid and forward the values of $\partial a_x^{q,v}(t)$ to the Aggregator. The downlink control can also be handled using a single broadcast message to the entire population, without addressing specific appliances. In fact, the Aggregator decides at time $t$ to move $d_{x,x'}^{q,v}(t)$ appliances from state $x$ to state $x'$  in a certain cluster $q$ of category $v$, and to activate them it can simply broadcast at every $t$ the following table:
\begin{equation}
\boldsymbol{\kappa}_x^{v}(t) = \left\{ \frac{d_{x,x'}^{q,v}(t)}{n_x^{q,v}(t)}, \forall v, \forall q \in \{1,\ldots Q^v\}, \forall x'\in \mathcal S^{q,v}_x\right\}.
\end{equation}
Upon receiving the table, if an appliance of category $v$ happens to be in cluster $q$ at state $x$, they have to move to state $x'$ with probability $ \frac{d_{x,x'}^{q,v}(t)}{n_x^{q,v}(t)}$. Given a sufficiently large population size, this randomized scheme will match closely the proportions scheduled, while keeping the actions of the customer private.

Dynamic pricing, instead, uses the following components:
\begin{enumerate}
\item \textit{Uplink  information} and {\it Measurement}: The Aggregator meters household consumption. 
\item  \textit{Downlink  information}: The Aggregator  posts prices for power consumption during a certain period. 
\end{enumerate}

\subsection{Privacy and Fairness}\label{sec.privacy}
In dynamic pricing, Aggregators get to observe the aggregate load, not individual uses. The consumption of the house includes various loads and the question is how easy it is to separate these contributions and determine all activities in the household. As argued in \cite{lalitha}, this issue can be interpreted as a special instance of {\it data differential privacy}, which is concerned with  information leaks from queries to databases \cite{diff2}. For the observer that looks at smart meter data, the database is a virtual one, with all the individual load injections separated and labeled, while the {\it query} produces their unlabeled sum.  The analysis in \cite{lalitha} explores the 
trade-off between privacy (reconstruction of individual appliance uses) and utility from the side of the observer, measured in terms of reconstruction error. 
In the case of the DLS, this question has not been asked before, but it turns out that differential privacy issues still exist even though customers could be anonymous under our proposed clustering approach. We address this concern next.

How much information leaks about a specific (anonymous) individual from the aggregate query $a_x^{q,v}(t)$?
Here we have  collectors with streaming databases $\boldsymbol{D}^v(t), v=1,\ldots,V$ that include the information about the index, cluster, and state $(i,q_i,x_i)$ of each individual appliance that arrives in category $v$.  At time $t$, every collector answers a specific database query  by the Aggregator on $\boldsymbol{D}^v(t)$ to obtain 
$a_x^{q,v}(t)$. These queries form the Aggregator database $\boldsymbol{A}^v(t), v=1, \ldots, V$. The Aggregator can then attempt to infer information about individual customers that can eventually lead to resolve their identity from $\boldsymbol{A}^v(t)$.

For each $\delta T$, the amount of information that leaks  to the Aggregator can be quantified as the mutual information between
 $\boldsymbol{D}^v(t)$ and $\boldsymbol{A}^v(t)$, denoted by $I(\boldsymbol{A}^v(t);\boldsymbol{D}^v(t))$ \cite{cover}.  
 Attempting to recover  $\boldsymbol{D}^v(t)$ from $\boldsymbol{A}^v(t)$ from a record of values $\boldsymbol{A}^v$ will incur an error no smaller than 
 what can be determined using the bound in \cite{hanverdu} (which improves upon the celebrated Fano's inequality \cite{cover}):
  \begin{equation}
\epsilon_{\boldsymbol{D}^v|\boldsymbol{A}^v}\geq 1 +\frac{I(\boldsymbol{A}^v;\boldsymbol{D}^v)+\log 2}{\log(\max Pr(\boldsymbol{A}^v))}
\end{equation}
In order to evaluate this bound, $\boldsymbol{D}^v(t)$ can be seen as the output of multiple {\it binary noiseless adder channels}. The inputs are Bernoulli random variables with a success probability $\epsilon_{x,i}^{q}(t)$, denoting the probability that a request from appliance $i$ arrives at time $t$, and is assigned to state $x$ of cluster $q$.
It is straightforward \cite{cover} to show that $I(\boldsymbol{A}^v(t);\boldsymbol{D}^v(t))$ is upper-bounded by:
 \begin{equation}
I(\boldsymbol{A}^v(t);\boldsymbol{D}^v(t)) \leq \sum_{q=1}^{Q^v}\sum_{x=0}^{E^q}  H(z^{q,v}_x)
\end{equation}
where $z^{q,v}_x \sim \mbox{Poisson}(\sum_{i \in {\mathcal P}^v} \epsilon_{x,i}^{q}(t))$.
This clarifies that anonymity does not guarantee privacy in DLS either. 

However, 
the trade-off for dynamic pricing and DLS is different. 
Having to design physically reliable prices for control, fair or not, Aggregators using dynamic pricing  need to learn the underlying structure of load and model the price-response  \eqref{priceresp}. 
This creates a greater conundrum in dynamic pricing between privacy and reliability that does not really exists in the anonymous DLS case. In an anonymous DLS scheme, appliance consumptions are separated but customers identities are not needed for the control. Therefore there is a natural separation between 
the information that is needed for billing and the information that is needed for control. This aspect can be advantageous for fair pricing.

\section{Numerical Case Study}\label{sec.numerical}
In our numerical tests we focus on DLS, which represents an upper-bound for what can be attained with dynamic pricing.
To showcase the benefits of quantized population load models, we study an Aggregator that manages the charging load of 40000 Plug-in Hybrid Electric Vehicles (PHEV) on a daily bases under a DLS program. All PHEVs studied here arrive in the grid for non-interruptible  level-1 (1.1 kWs) charging based on arrival and charge amount statistics presented in \cite{alizadeh2013ev}. 
PHEV owners decide whether to participate in the DLS program or not, and how much laxity to offer to the Aggregator, based on incentives $\boldsymbol{i}^v(t)$ designed in \cite{alizadeh}. Thus, here we take $n^v_{\boldsymbol{\vartheta}}(\boldsymbol{i}^v)$ as given.

\begin{figure}
\centering
\includegraphics[width = \linewidth]{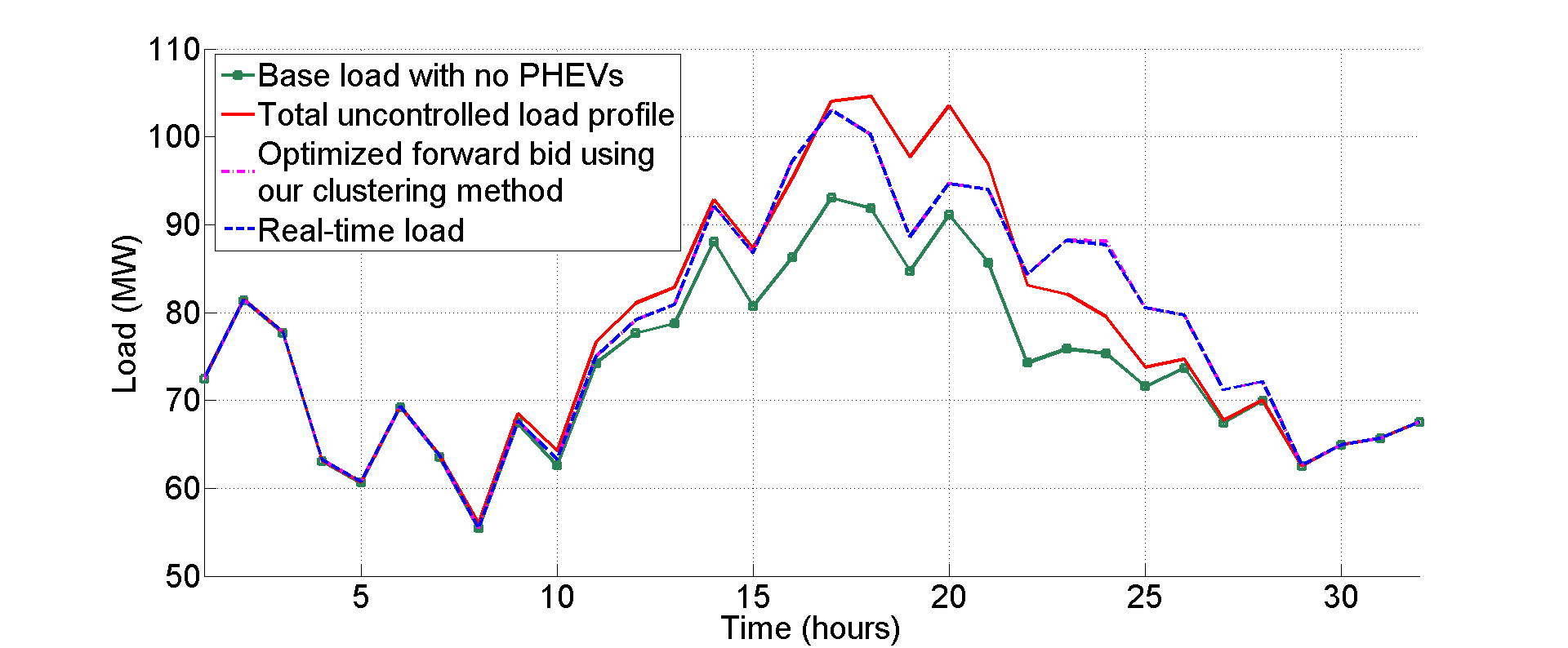} 
\caption{The aggregator can perfectly follow the day-ahead bid that was optimized under our proposed demand clustering method.}
\label{fig:clustering}
\end{figure}

\begin{figure}
\includegraphics[width = \linewidth]{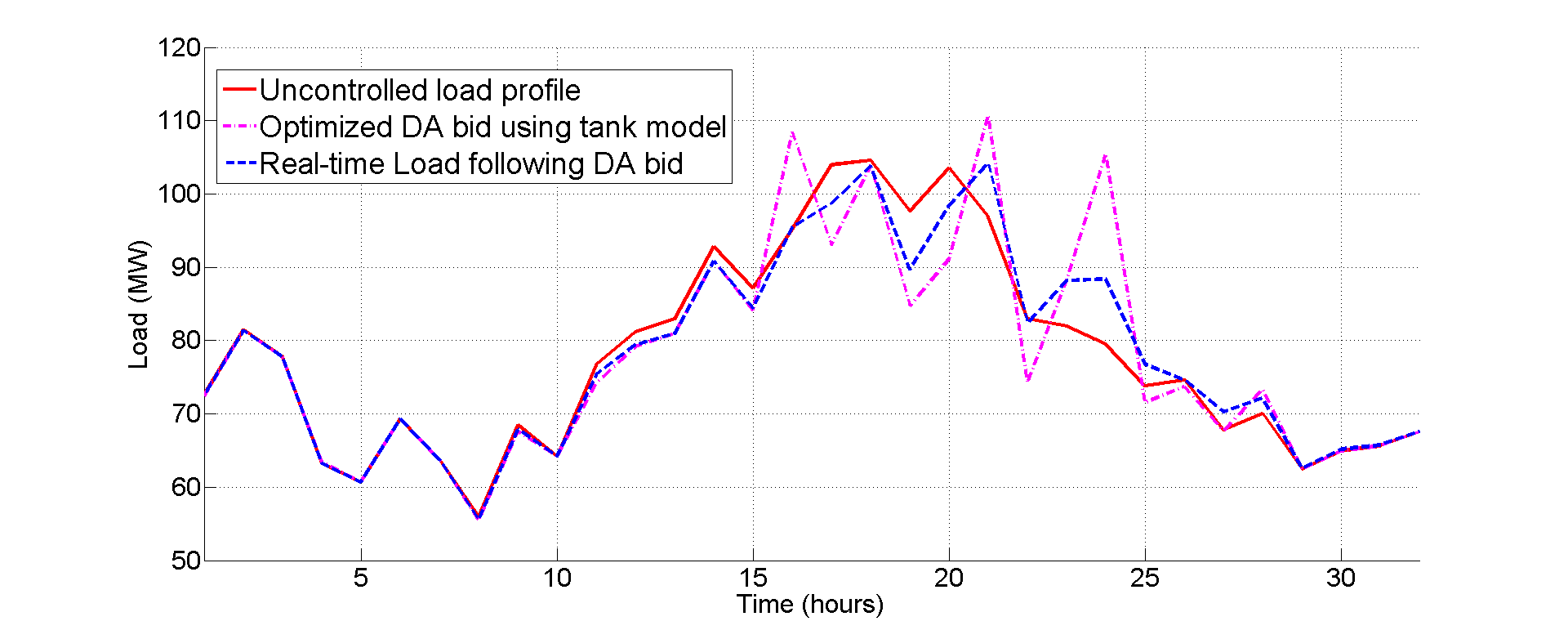} 
\caption{The aggregator cannot follow the day-ahead bid optimized using the tank model, simply due to modeling errors.}
\label{fig:tank}
\end{figure}

Participating PHEVs in the DLS program are clustered into 15 different clusters based on the amount of charge required (quantized with 1 kWh steps into 5 levels) and charge laxity (quantized with 1 hour steps into 3 levels). The Aggregator needs to use the flexibility of recruited PHEVs to minimize its energy market costs. We assume a simple cost structure. The forward market cost is determined by a forward purchase $B(h)$ for each hour $h$ of the day, multiplied by the price $\pi^F(h)$. The real-time cost is based on deviations from $B(h)$, both upward and downward. Market prices reflect day-ahead PJM prices on October 22nd, 2013.

The performance of a quantized population model in determining the optimal  $B(h)$ is compared to that of the tank model  \cite{homer, 6426102,ortega}. A tank model effectively treats every appliance as an canonical battery, thus discarding the information that the PHEV charge here is not interruptible and can only happen at a 1.1 kW rate. Both the tank model and our population model in \eqref{lint} are linear, making the optimization an  integer linear program. Stochasticity of vehicle arrivals is handled through sample average approximation. We extend the length of each day to 32 hours to show the effect of PHEV charges that are deferred into the next day. Figures  \ref{fig:clustering} and \ref{fig:tank} show the performance of each model in generating $B(h)$ that can be followed in real-time by a direct scheduler.
Notice that the overly optimistic $B(h)$ determined by the tank model leads to large real-time costs.

\section{Conclusions}
We presented a scalable model to quantify the load flexibility of a large population of small heterogeneous appliances.
The scheme can be naturally generalized to cover other categories of appliances not studied here, e.g., thermostatically controlled loads. We leave for future work the study of the application of our proposed hybrid models to learn the price response of a heterogeneous population.

\bibliographystyle{IEEEtran}
\small
\bibliography{New2,new4,new5,newrefs}

\end{document}